\theoremstyle{plain}
\newtheorem{theorem}{Theorem}
\newtheorem{proposition}{Proposition}
\theoremstyle{definition}
\newtheorem{definition}{Definition}
\newtheorem{assumption}{Assumption}
\theoremstyle{remark}
\newtheorem{remark}{Remark}
\title{Moravec's Paradox and Restrepo's Model: Limits of AGI Automation in Growth}
\author{
Marc Bara\thanks{\href{https://orcid.org/0009-0005-1480-5760}{ORCID: 0009-0005-1480-5760}} \\
Department of Strategy and General Management \\
ESADE Business School \\
Barcelona, Spain \\
\texttt{marcoantonio.bara@esade.edu}
}
\date{September 29, 2025}
\begin{document}
\maketitle

\begin{abstract}
This note extends Restrepo (2025)'s model of economic growth under AGI by incorporating Moravec's Paradox—the observation that tasks requiring sensorimotor skills remain computationally expensive relative to cognitive tasks. We partition the task space into cognitive and physical components with differential automation costs, allowing infinite costs for some physical bottlenecks. Our key result shows that when physical tasks constitute economic bottlenecks with sufficiently high (or infinite) computational requirements, the labor share of income converges to a positive constant in the finite-compute regime (rather than zero). This fundamentally alters the distributional implications of AGI while preserving the growth dynamics for cognitive-intensive economies.
\end{abstract}

\section{Introduction}
Restrepo (2025) develops a framework for economic growth in which Artificial General Intelligence (AGI) can perform any human task given sufficient computational resources. In his model, all economically essential ``bottleneck'' work is eventually automated, wages converge to the computational cost of replicating human work, and labor's share of GDP approaches zero as computational resources expand.

This note relaxes one of his assumptions: that all task types have uniform automation costs. Drawing on Moravec's Paradox \citep{moravec1988}—the observation that tasks humans find effortless (perception, mobility, manipulation) often require enormous computational resources, while tasks humans find difficult (mathematics, logic) require relatively modest computation—we extend his model to allow for differential automation costs across cognitive and physical tasks.

Empirical estimates support this differential: For cognitive tasks, modern language models like GPT-4 require approximately \(10^{11}\) to \(10^{13}\) FLOPS for inference on reasoning tasks \cite{patel2023gpt4, villalobos2023inference}. In contrast, real-time sensorimotor tasks in robotics, such as dexterous manipulation or navigation, demand \(10^{15}\) to \(10^{18}\) FLOPS per second due to high-fidelity physics simulations, visual processing, and control loops \cite{erdil2024moravec}. This ``Moravec gap'' of several orders of magnitude justifies our assumption of \(\alpha^p(\omega) \gg \alpha^c(\omega)\).

Our contribution is consequential: we show that when physical bottleneck tasks have sufficiently high computational requirements relative to available compute, they can remain human-operated over long horizons—and indefinitely if $\alpha^p=\infty$ or compute is bounded—preserving a positive labor share even as the economy grows. This modification reconciles AGI models with the persistent difficulty of robotic automation in physical domains.

Recent estimates place global AI compute capacity at approximately \(10^{22}\) FLOPS in 2025, dominated by the US and private sector \citep{pilz2025}.

\section{Model}

\subsection{Setup}

Following \citet{restrepo2025}, the economy produces output $Y_t$ by completing work tasks $\omega \in \Omega$. We extend his framework by partitioning the task space:
$$\Omega = \Omega_c \cup \Omega_p$$
where $\Omega_c$ represents cognitive tasks and $\Omega_p$ represents physical tasks, with $\Omega_c \cap \Omega_p = \emptyset$.

The quantity of task $\omega$ completed at time $t$ is:
$$X_t(\omega) = L_t(\omega) + \frac{1}{\alpha_t(\omega)}Q_t(\omega)$$
where $L_t(\omega)$ is human labor allocated to task $\omega$, $Q_t(\omega)$ is computational resources allocated to task $\omega$, and $\alpha_t(\omega)$ is the computational cost of replicating one unit of human work.

Output is produced according to:
$$Y_t = F(\{X_t(\omega)\}_{\omega \in \Omega})$$
where $F$ is increasing, differentiable, concave, and exhibits constant returns to scale.

\subsection{The Moravec Modification}

We depart from Restrepo by assuming differential automation costs:

\begin{assumption}[Differential Automation Costs]
\label{ass:moravec}
For cognitive tasks $\omega \in \Omega_c$, $\alpha_t^c(\omega) \to \alpha^c(\omega) < \infty$. For physical tasks $\omega \in \Omega_p$, $\alpha_t^p(\omega) \to \alpha^p(\omega)$, where $\alpha^p(\omega) \gg \alpha^c(\omega)$ and $\alpha^p(\omega) = \infty$ for a non-empty subset of physical bottlenecks to capture fundamental physical limits (e.g., energy or real-time constraints in robotics).
\end{assumption}

This captures Moravec's Paradox: cognitive tasks become automatable with finite (and relatively modest) compute, while physical tasks require vastly more computational resources or may remain impossible to automate.

\subsection{Bottlenecks and Resource Constraints}

Following Restrepo's Definition 1:

\begin{definition}[Bottleneck Work]
\label{def:bottleneck}
Task $\omega$ is a bottleneck if for any $\{X_t(\omega)\}$ such that output $F(\{X_t(\omega)\})$ is unbounded, either:
\begin{enumerate}
\item $X_t(\omega)$ is unbounded, or
\item The marginal product $F_\omega(\{X_t(\omega)\})$ is unbounded
\end{enumerate}
\end{definition}

We introduce:

\begin{assumption}[Physical Bottlenecks Exist]
\label{ass:physical}
There exists a non-empty subset $\Omega_p^B \subset \Omega_p$ of physical bottleneck tasks.
\end{assumption}

The economy faces resource constraints:
\begin{align}
\sum_{\omega \in \Omega} Q_t(\omega) &\leq Q_t \\
\sum_{\omega \in \Omega(s)} L_t(\omega) &\leq H(s) \quad \forall s \in \mathcal{S}
\end{align}
where $Q_t$ is total computational resources and $H(s)$ is the supply of skill type $s$.
\paragraph{Two regimes used in the results.}
We analyze two limiting regimes:
\begin{itemize}
\item \textbf{Finite-compute regime}: $Q_t \uparrow Q_{\max}<\infty$. Persistence of human labor can follow from large but \emph{finite} automation costs.
\item \textbf{Unbounded-compute regime}: $Q_t \to \infty$. Persistence follows if at least one essential physical bottleneck has $\alpha^p(\omega)=\infty$.
\end{itemize}
Each result below states which regime it requires.

\section{Results}

\subsection{Persistent Human Labor in Physical Tasks}

\begin{proposition}[Incomplete Automation: Finite-Compute Regime]\label{prop:incomplete}
Assume $Q_t \uparrow Q_{\max}<\infty$ and let $\omega\in\Omega_p^B$ be an essential physical bottleneck.
Let $L_\omega>0$ denote the asymptotic human labor flow allocated to $\omega$ along an efficient allocation.
If
\[
\alpha^p(\omega)\,L_\omega \;>\; Q_{\max},
\]
then $\omega$ cannot be fully automated in the limit: $L_t(\omega)>0$ for all sufficiently large $t$.
\end{proposition}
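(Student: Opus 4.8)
The plan is to argue directly from the task-level feasibility constraint, exploiting the fact that compute is capped at $Q_{\max}$ while the automation cost $\alpha_t(\omega)$ stays near $\alpha^p(\omega)$. First I would fix the essential physical bottleneck $\omega\in\Omega_p^B$ and write its production identity $X_t(\omega)=L_t(\omega)+Q_t(\omega)/\alpha_t(\omega)$. The entire content of the statement is a comparison between the maximal task output that compute alone can deliver and the task output the efficient allocation actually demands, so the two quantities I need to control are $Q_t(\omega)/\alpha_t(\omega)$ from above and $X_t(\omega)$ from below.

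For the upper bound, I would combine the compute constraint $Q_t(\omega)\le\sum_{\omega'}Q_t(\omega')\le Q_t$ with the hypothesis $Q_t\uparrow Q_{\max}$ and the convergence $\alpha_t(\omega)\to\alpha^p(\omega)$ from Assumption~\ref{ass:moravec}. This yields $\limsup_t Q_t(\omega)/\alpha_t(\omega)\le Q_{\max}/\alpha^p(\omega)$: even devoting the entire budget to $\omega$, the machine-provided share of the task cannot asymptotically exceed $Q_{\max}/\alpha^p(\omega)$. An $\varepsilon$-argument handles the convergence of $\alpha_t(\omega)$ cleanly: for any $\varepsilon>0$ and all large $t$, $\alpha_t(\omega)\ge\alpha^p(\omega)-\varepsilon$, so $Q_t(\omega)/\alpha_t(\omega)\le Q_{\max}/(\alpha^p(\omega)-\varepsilon)$, and letting $\varepsilon\downarrow 0$ closes the bound.

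For the lower bound I would identify $L_\omega$, the asymptotic human labor flow along the efficient allocation, with the task output that allocation sustains in the limit, so that $\liminf_t X_t(\omega)\ge L_\omega$. This is where essentiality enters: because $\omega$ is a bottleneck (Definition~\ref{def:bottleneck}), its task output cannot be driven to zero without choking off output, so an efficient plan holds $X_t(\omega)$ at the level its own labor allocation defines. Combining the two bounds in the identity, for all sufficiently large $t$,
\[
L_t(\omega)=X_t(\omega)-\frac{Q_t(\omega)}{\alpha_t(\omega)}\ge L_\omega-\frac{Q_{\max}}{\alpha_t(\omega)}\longrightarrow L_\omega-\frac{Q_{\max}}{\alpha^p(\omega)}=\frac{\alpha^p(\omega)\,L_\omega-Q_{\max}}{\alpha^p(\omega)}>0,
\]
where the final strict inequality is exactly the hypothesis $\alpha^p(\omega)\,L_\omega>Q_{\max}$. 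Hence $L_t(\omega)>0$ for all sufficiently large $t$, i.e.\ $\omega$ is never fully automated in the limit.

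The step I expect to be the main obstacle is the lower bound $\liminf_t X_t(\omega)\ge L_\omega$: the proposition takes $L_\omega$ as ``the asymptotic human labor flow along an efficient allocation,'' and I must argue that efficiency genuinely pins the task near that level rather than letting the planner economize on $\omega$ by substituting other tasks. I would discharge this by appealing to the essentiality of the bottleneck---were $X_t(\omega)$ to fall below $L_\omega$, either the marginal product $F_\omega$ would blow up (making it strictly profitable to reallocate compute or labor back to $\omega$) or output would be bounded, contradicting optimality along the growth path. Making this reallocation argument fully rigorous requires the first-order conditions of the efficient allocation, which is the only place the concavity and constant-returns structure of $F$ does real work.
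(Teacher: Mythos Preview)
Your argument is correct but takes a different route from the paper. The paper argues by contradiction: suppose $L_t(\omega)\to 0$; the bottleneck dichotomy of Definition~\ref{def:bottleneck} then forces either $X_t(\omega)\to\infty$---impossible since $Q_t(\omega)\approx\alpha^p(\omega)X_t(\omega)\to\infty$ would exceed $Q_{\max}$---or $F_\omega\to\infty$, in which case reallocating a small amount of labor to $\omega$ strictly raises output, contradicting optimality. Notably, the paper's proof never invokes the threshold condition $\alpha^p(\omega)L_\omega>Q_{\max}$; it extracts only the qualitative conclusion $L_t(\omega)\not\to 0$ from bounded compute plus essentiality. Your direct feasibility bound, by contrast, uses the threshold to produce the quantitative estimate $L_t(\omega)\gtrsim L_\omega-Q_{\max}/\alpha^p(\omega)>0$, which is sharper and makes transparent why that particular inequality is the right hypothesis. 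The price is precisely the step you flag: under the literal reading $L_\omega=\lim_t L_t(\omega)$, the assumption $L_\omega>0$ already delivers the conclusion and the threshold is idle, so your argument implicitly reads $L_\omega$ as the limiting \emph{task level} $\lim_t X_t(\omega)$. The paper sidesteps this ambiguity by not touching $L_\omega$ at all; your essentiality/first-order-condition justification is exactly what is required to make that reinterpretation rigorous, and in that sense your proof is more faithful to the stated hypothesis while the paper's is more self-contained.
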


\begin{proof}
Suppose instead that $L_t(\omega)\to 0$ and $\omega$ is fully automated asymptotically. Because $\omega$ is a bottleneck, along any unbounded-output path either $X_t(\omega)\to\infty$ or $\partial F/\partial X_\omega\to\infty$.

If $X_t(\omega)\to\infty$, then $Q_t(\omega)=\alpha^p(\omega)X_t(\omega)\to\infty$ (since $X_t=L_t+Q_t/\alpha^p$ and $L_t(\omega)\to 0$ in the full-automation branch), which is impossible because $Q_t(\omega)\le Q_{\max}$.

If $X_t(\omega)$ remains bounded while other bottleneck inputs grow, then $\partial F/\partial X_\omega\to\infty$. With $Q_{\max}$ finite, reallocating an arbitrarily small positive amount of labor to $\omega$ strictly raises output (the shadow value at $\omega$ diverges), contradicting optimality of $L_t(\omega)\to0$. Hence $L_t(\omega)\not\to 0$.
\end{proof}

\subsection{Positive Labor Share}

\begin{theorem}[Persistent Labor Share under Cobb--Douglas]\label{thm:laborshare}
Suppose aggregate output is Cobb--Douglas in cognitive and physical aggregates,
\[
Y_t \;=\; A_t\,X_{c,t}^{\,1-\beta}\,X_{p,t}^{\,\beta},\qquad 0<\beta<1.
\]
In the unbounded-compute regime, if at least one essential physical bottleneck has $\alpha^p(\omega)=\infty$ so that $X_{p,t}=L_t$ in the limit while $X_{c,t}\approx Q_t/\alpha^c$, then under perfect competition
\[
\lim_{t\to\infty}\frac{w_t L_t}{Y_t}\;=\;\beta\;>\;0.
\]
\end{theorem}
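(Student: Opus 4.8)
The plan is to exploit the constant-factor-share property of Cobb--Douglas technology and to show that, in the stated limit, the entire income accruing to the physical aggregate coincides with the labor bill. The algebra is short once the limiting structure is pinned down, so most of the work is conceptual rather than computational.

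First I would fix the limiting production structure implied by the hypotheses. In the unbounded-compute regime every cognitive task is eventually automated, so labor allocated to $\Omega_c$ vanishes and $X_{c,t}\approx Q_t/\alpha^c\to\infty$; simultaneously, the essential physical bottleneck with $\alpha^p(\omega)=\infty$ forces the compute term $Q_t(\omega)/\alpha^p(\omega)\to 0$, so that $X_{p,t}=L_t$ in the limit. Hence in the limit labor enters $Y_t$ only through the physical aggregate and compute only through the cognitive aggregate. Second, I would invoke competitive factor pricing: the wage equals the marginal product of labor, and because labor enters $Y_t$ solely via $X_{p,t}$ with the limiting map from labor to the physical aggregate being the identity, the chain rule gives $w_t=\partial Y_t/\partial X_{p,t}$. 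For the Cobb--Douglas form this marginal product is $\beta A_t X_{c,t}^{\,1-\beta}X_{p,t}^{\,\beta-1}=\beta\,Y_t/X_{p,t}$. Substituting $X_{p,t}=L_t$ yields $w_t=\beta\,Y_t/L_t$, so $w_t L_t/Y_t=\beta$. Since this identity holds exactly whenever the limiting structure holds, passing to $t\to\infty$ delivers the claim.

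The main obstacle is rigor in the limiting step rather than the final computation. I must justify that (i) no labor ``leaks'' to cognitive tasks asymptotically, so that the whole wage bill is attributable to the physical aggregate, and (ii) the marginal-product characterization of the competitive wage survives the passage to the limit, with the approximation errors in $X_{p,t}=L_t$ and $X_{c,t}\approx Q_t/\alpha^c$ vanishing fast enough that the share genuinely converges to $\beta$ rather than merely staying near it. A secondary subtlety is aggregation: the argument presumes the physical aggregate is obtained one-to-one from labor, which I would verify is consistent with the constant-returns aggregator $F$ restricted to $\Omega_p$ once the infinite-cost bottleneck removes compute as a substitute. If $X_{p,t}$ is only asymptotically proportional to $L_t$ rather than equal, the same conclusion follows provided the proportionality constant is absorbed into the wage, since Cobb--Douglas shares are invariant to such rescalings.
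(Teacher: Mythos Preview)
Your proposal is correct and follows essentially the same route as the paper: identify the limiting structure $X_{p,t}=L_t$, $X_{c,t}\approx Q_t/\alpha^c$, then invoke the Cobb--Douglas constant-share property under competitive pricing to conclude $w_t L_t/Y_t=\beta$. The paper's own proof is a two-sentence sketch of exactly this, so your version is simply a more careful elaboration with the marginal-product computation made explicit and the limiting-step subtleties flagged rather than suppressed.
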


\begin{proof}
With $\alpha^p=\infty$ for an essential physical bottleneck, $X_{p,t}=L_t$ as $t\to\infty$, while cognitive work is automated so $X_{c,t}\approx Q_t/\alpha^c$. Under perfect competition with Cobb--Douglas, factor shares equal exponents, so $w_t L_t/Y_t=\beta$.
\end{proof}

\begin{remark}[Beyond Cobb--Douglas]
With CES between $X_{c,t}$ and $X_{p,t}$ and finite elasticity, the share need not be constant but will converge if the relative quantity ratio stabilizes; the limit equals the CES share evaluated at the limiting ratio.
\end{remark}

\subsection{Modified Growth Dynamics}

\noindent\textit{Notation.} We use $\alpha^c$ (not $\alpha_c$) for cognitive automation cost; $X_{c,t}$ and $X_{p,t}$ for cognitive/physical aggregates; $A_t$ for Hicks-neutral productivity and $A_t^L$ for labor-augmenting productivity. In this subsection we specifically assume labor-augmenting $A_t^L$.

\begin{proposition}[Growth with Physical Bottlenecks: Labor-Augmenting Productivity]\label{prop:growth}
Let
\[
Y_t = \big(Q_t/\alpha^c\big)^{1-\beta} \big(A_t^L L_t\big)^{\beta}.
\]
Then
\[
g_Y \;=\; (1-\beta)\,g_Q \;+\; \beta\,g_L \;+\; \beta\,g_{A^L}.
\]
\end{proposition}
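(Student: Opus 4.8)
The plan is to reduce the claim to a standard growth-accounting identity by taking logarithms of the production function and differentiating in time. The key observation is that the right-hand side is a product of power functions, so its logarithm is a linear combination of the logarithms of the inputs; differentiating then converts each logarithmic derivative into the corresponding growth rate $g_Z \equiv \dot{Z}_t/Z_t$, while any time-invariant parameter contributes a constant that vanishes.

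First I would take natural logs of
\[
Y_t = \big(Q_t/\alpha^c\big)^{1-\beta}\big(A_t^L L_t\big)^{\beta},
\]
obtaining
\[
\log Y_t = (1-\beta)\log Q_t - (1-\beta)\log\alpha^c + \beta\log A_t^L + \beta\log L_t.
\]
Here I use that $\beta$ and the cognitive automation cost $\alpha^c$ are fixed parameters (the latter being the finite limit from Assumption~\ref{ass:moravec}), so the term $-(1-\beta)\log\alpha^c$ is constant in $t$. Next I would differentiate with respect to $t$ along the efficient path, assuming the input paths $Q_t$, $L_t$, $A_t^L$ are positive and differentiable so that each $\frac{d}{dt}\log Z_t = \dot{Z}_t/Z_t = g_Z$ is well defined. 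The constant term drops out, yielding
\[
g_Y = (1-\beta)g_Q + \beta g_{A^L} + \beta g_L,
\]
which is the claimed decomposition after reordering.

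There is essentially no analytical obstacle here: the Cobb--Douglas exponents act directly as constant output elasticities, so the result is an exact identity rather than an approximation. The only points requiring care are (i) confirming that the reduced-form production function is the right object---it presupposes, via Theorem~\ref{thm:laborshare} and the surrounding discussion, that the cognitive aggregate is fully automated so that $X_{c,t}=Q_t/\alpha^c$ while the physical aggregate reduces to labor augmented by $A_t^L$---and (ii) ensuring differentiability of the time paths; if one instead works in discrete time, the same statement holds as the usual first-order log-linear approximation $\log(Z_{t+1}/Z_t)\approx g_Z$. Neither caveat affects the algebra.
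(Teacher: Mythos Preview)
Your argument is correct and matches the paper's own proof exactly: both take logarithms of the Cobb--Douglas expression and differentiate in time, so that the constant $-(1-\beta)\log\alpha^c$ drops out and the exponents become the weights on $g_Q$, $g_{A^L}$, and $g_L$. The additional remarks you make about differentiability and the reduced-form assumption are reasonable caveats but not needed for the identity itself.
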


\begin{proof}
Log-differentiate \(Y_t = \big(Q_t/\alpha^c\big)^{1-\beta} \big(A_t^L L_t\big)^{\beta}\): \(g_Y = (1-\beta) g_Q + \beta (g_{A^L} + g_L)\).
\end{proof}

\section{Numerical Example}
Consider a simple two-task economy with Cobb-Douglas production:
\[
Y_t = X_{c,t}^{0.5} \cdot X_{p,t}^{0.5}.
\]
To illustrate the implications, we use the following parameters grounded in rough estimates of current and projected computational demands:
\begin{itemize}
    \item Cognitive task: \(\alpha^c = 10^{14}\) FLOP per year per human-hour equivalent (approximating inference costs for advanced language models scaled to human-level reasoning).
    \item Physical task: \(\alpha^p = 10^{21}\) FLOP per year per human-hour equivalent (reflecting Moravec's Paradox, where sensorimotor tasks like manipulation require orders of magnitude more compute due to real-time perception, high-fidelity physics simulations, and control loops).\footnote{Parameters are rough estimates grounded in 2025 data; exact values may vary with utilization and overhead.}
    \item Compute growth: \(Q_t = Q_0 e^{0.2 t}\) with initial \(Q_0 = 10^{22}\) FLOP per year (an estimate of global AI compute capacity in 2025, growing at approximately 22\% per year).
    \item Labor supply: \(L = 10^9\) human-hours per year (fixed; \(\approx 5\times 10^5\) full-time workers at \(\sim 2{,}000\) h/yr).

\end{itemize}

\noindent\textit{Units.} We normalize compute to the model period so that $Q_t$ and $\alpha$ share the same time base (per period); this avoids FLOP/s vs.\ per-hour inconsistencies and leaves comparative statics unchanged. For definiteness, one model period equals one year. Thus $Q_t$ denotes total FLOP \emph{per year}, and $\alpha$ is in FLOP \emph{per year} per human-hour equivalent. Accordingly, $L$ is measured in human-hours \emph{per year}, so that $\alpha L$ and $Q_t$ are commensurate.

Under Restrepo's uniform automation assumption (where \(\alpha^c = \alpha^p\)):
\begin{itemize}
    \item Both tasks become automated as compute grows.
    \item The labor share converges to 0 as \(Q_t \to \infty\).
\end{itemize}
Under our Moravec-gap calibration ($\alpha^p \gg \alpha^c$):
\begin{itemize}
\item The cognitive task becomes fully automated at $t^* \approx 11.5$ years ($Q_{t^*}=\alpha^c L$).
\item The physical task becomes automatable around $t \approx \frac{1}{0.2}\ln\!\big(\frac{10^{30}}{10^{22}}\big)\approx 92.1$ years since $Q_t$ eventually exceeds $\alpha^p L=10^{30}$ FLOP per year; thus by $t=100$ physical is automated.
\item Consequently, the labor share approaches $\approx 0.5$ after $t^*$ as cognitive work automates, and approaches $0$ by $t=100$ in this calibration as physical work eventually automates.
\end{itemize}

\begin{table}[h]
\centering
\begin{tabular}{|c|c|c|c|c|}
\hline
Time (years) & $Q_t$ (FLOP per year) & Cognitive Automated? & Physical Automated? & Labor Share \\

\hline
0 & $1.00\times 10^{22}$ & No & No & 1.0 \\
11.5 & $1.00\times 10^{23}$ & Yes & No & 0.5 \\
100 & $4.85 \times 10^{30}$ & Yes & Yes & 0.0 \\
\hline
\end{tabular}
\caption{Key values over time in the numerical example. Parameters: $\alpha^c=10^{14}$, $\alpha^p=10^{21}$, $L=10^{9}$ human-hours/year, $Q_0=10^{22}$ FLOP/year, $g_Q\approx 22\%$. Cognitive automates at $t^*\approx 11.5$ years; physical automates around $t\approx 92.1$ years and is automated by year 100. This illustrates the finite-vs-infinite compute distinction: with $\alpha^p<\infty$, labor share eventually goes to 0; with $\alpha^p=\infty$ or $Q_t$ bounded, it converges to a positive constant. \textit{All $Q_t$ values shown are rounded.}}

\label{tab:example}
\end{table}

This example shows a long transitional period during which the Moravec gap preserves human labor in physical bottlenecks (labor share near the physical elasticity), followed by eventual full automation once $Q_t$ exceeds $\alpha^p L$; by year 100 in this calibration, labor share falls to zero.

\section{Discussion}

The incorporation of Moravec's Paradox fundamentally alters the long-run implications of AGI for labor markets. While \citet{restrepo2025} predicts complete economic irrelevance of human labor, we show that physical constraints can preserve a positive labor share indefinitely. In practice, this depends on whether $\alpha^p$ is truly infinite or merely extremely large: with finite $\alpha^p$, continued exponential growth of $Q_t$ eventually automates physical work as well.

This modification has three key implications:

\begin{enumerate}
\item \textbf{Persistent Labor Value}: Workers specializing in physical bottleneck tasks maintain economic value not because of computational scarcity, but due to the fundamental difficulty of physical automation.

\item \textbf{Dual Growth Regime}: The economy operates under two growth constraints simultaneously—cognitive tasks limited by compute, physical tasks limited by human labor and robotics progress.

\item \textbf{Different Distributional Outcomes}: Rather than all income flowing to compute owners, workers in physical bottlenecks retain a stable income share.
\end{enumerate}

Our results suggest that policy focus should shift from preventing complete labor displacement to managing the cognitive-physical divide in automation. Investment in physical productivity (better tools, exoskeletons, human augmentation) may yield higher returns than pure compute expansion.

\section{Conclusion}

We extend Restrepo (2025)'s AGI growth model by incorporating differential automation costs between cognitive and physical tasks. This modification, grounded in Moravec's Paradox, shows that physical bottlenecks can preserve positive labor shares even with unlimited computational growth. While cognitive work may become fully automated, the persistence of human-operated physical tasks fundamentally changes the distributional implications of AGI.

Future work should empirically estimate the automation cost differentials $\alpha^p(\omega)/\alpha^c(\omega)$ across task categories and examine policy implications of this dual automation regime.


\begin{thebibliography}{99}

\bibitem[Moravec(1988)]{moravec1988}
Moravec, H. (1988).
\newblock \emph{Mind Children: The Future of Robot and Human Intelligence}.
\newblock Harvard University Press.

\bibitem[Restrepo(2025)]{restrepo2025}
Restrepo, P. (2025).
\newblock We Won't Be Missed: Work and Growth in the Era of AGI.
\newblock Yale University Working Paper.

\bibitem[Acemoglu and Autor(2011)]{acemoglu2011}
Acemoglu, D. and Autor, D. (2011).
\newblock Skills, Tasks and Technologies: Implications for Employment and Earnings.
\newblock \emph{Handbook of Labor Economics}, 4:1043--1171.

\bibitem[Brynjolfsson and Mitchell(2017)]{brynjolfsson2017}
Brynjolfsson, E. and Mitchell, T. (2017).
\newblock What can machine learning do? Workforce implications.
\newblock \emph{Science}, 358(6370):1530--1534.


\bibitem{patel2023gpt4}
Patel, N. N. and others. How much AI inference can we do?. \emph{LessWrong}, 2023. Available at: \url{https://www.lesswrong.com/posts/g7H2sSGHAeYxCHzrz/how-much-ai-inference-can-we-do}.

\bibitem{villalobos2023inference}
Villalobos, P. Retrospective on 'GPT-4 Predictions' After the Release of GPT-4o. \emph{LessWrong}, 2023. Available at: \url{https://www.lesswrong.com/posts/iQx2eeHKLwgBYdWPZ/retrospective-on-gpt-4-predictions-after-the-release-of-gpt}.

\bibitem{erdil2024moravec}
Erdil, E. and Besiroglu, T. (2024).
\newblock Moravec's Paradox and its Implications.
\newblock Epoch AI Gradient Updates, \url{https://epoch.ai/gradient-updates/moravec-s-paradox}

\bibitem{pilz2025}
Pilz, S. et al. (2025).
\newblock GPU Clusters Dataset.
\newblock Epoch AI, \url{https://epoch.ai/data-insights/ai-supercomputers-performance-share-by-country}.

\end{thebibliography}
\end{document}